% !TEX encoding = UTF-8 Unicode
%%%%
%%%% Nobutaka Nakazono
%%%%
%%%% from 31 January 2025
%%%%
\documentclass[a4paper,reqno]{amsart}
\usepackage{amsmath,amssymb,mathrsfs}
\usepackage{txfonts,bm,pifont}
\usepackage{cite}
\usepackage{float}
\usepackage{stackrel}
\usepackage{subcaption}
\captionsetup[subfigure]{labelfont=rm,format=hang}
\captionsetup[figure]{labelfont=rm,format=hang}
\usepackage{graphicx,xcolor}% これはtikzと相性悪い
\usepackage{comment}
\usepackage{enumitem}
\usepackage{longtable}
\usepackage{hhline}%表の2重線のため
\usepackage {diagbox}%表の斜め線のため
\usepackage{hyperref}
%\usepackage[dvipdfmx]{hyperref}
%%%%%%%%%%%%%%%%%%%%%%%%%%%%%%%%%%%%%%%%%%%%%%%%%%%%%%%%
%%% macros
%%%%%%%%%%%%%%%%%%%%%%%%%%%%%%%%%%%%%%%%%%%%%%%%%%%%%%%%

\newtheorem{theorem}{Theorem}[section]

\newtheorem{lemma}[theorem]{Lemma}
\newtheorem{remark}[theorem]{Remark}
\theoremstyle{definition}

\numberwithin{equation}{section}
\numberwithin{figure}{section}
\numberwithin{table}{section}
%
% newcommand 
%
% tilde

\newcommand{\wutilde}[1]{\vrule depth 0pt width 0pt%
{\raise0.8pt\hbox{$\smash{{\mathop{#1} \limits_{\displaystyle\widetilde{}}}}$}}}
\newcommand{\wuhat}[1]{\vrule depth 0pt width 0pt%
{\raise0.6pt\hbox{$\smash{{\mathop{#1} \limits_{\displaystyle\widehat{}}}}$}}}
% (operator)

% (symbol)
\newcommand{\al}{\alpha}
\newcommand{\be}{\beta}

\newcommand{\ga}{\gamma}

\newcommand{\la}{\lambda}

\newcommand{\ep}{\epsilon}

\newcommand{\bbZ}{\mathbb{Z}}

\newcommand{\bbC}{\mathbb{C}}

% (special symbol)

\newcommand{\ii}{{\rm i}}

%
% newcommand この論文だけの
%
\newcommand{\qPD}{\text{\rm$q$-P$_{\rm III}^{D_7^{(1)}}$}}
\newcommand{\qP}[1]{\text{\rm$q$-P$_{\rm #1}$}}
%
% Figure caption. Default Figure 1: --> Figure 1.
%
\makeatletter
\long\def\@makecaption#1#2{
 \vskip 10pt
 \setbox\@tempboxa\hbox{#1. #2}
 \ifdim \wd\@tempboxa >\hsize #1. #2\par \else \hbox
to\hsize{\hfil\box\@tempboxa\hfil}
 \fi}
\makeatother
%
% change the footnote symbol
%

%
% array環境の行間
%

%
% subjclassが1991となる問題解決のため
%
\makeatletter
\@namedef{subjclassname@2020}{%
  \textup{2020} Mathematics Subject Classification}
\makeatother
%%%%%%%%%%%%%%%%%%%%%%%%%%%%%%%%%%%%%%%%%%%%%%%%%%%%%%%
%%% document
%%%%%%%%%%%%%%%%%%%%%%%%%%%%%%%%%%%%%%%%%%%%%%%%%%%%%%%
\begin{document}
\allowdisplaybreaks

%%%%%%%%%%%%%%%%%%%%%%%%%%%%%%%%%%%%%%%%%%%%%%%%%%%%%%%
%%% title
%%%%%%%%%%%%%%%%%%%%%%%%%%%%%%%%%%%%%%%%%%%%%%%%%%%%%%%
\title[]{Solutions to an autonomous discrete KdV equation via Painlev\'e-type ordinary difference equations}
\author{Nobutaka Nakazono}
\address{Institute of Engineering, Tokyo University of Agriculture and Technology, 2-24-16 Nakacho Koganei, Tokyo 184-8588, Japan.}
\email{nakazono@go.tuat.ac.jp}
%%%%%%%%%%%%%%%%%%%%%%%%%%%%%%%%%%%%%%%%%%%%%%%%%%%%%%%
%% Abstract
%%%%%%%%%%%%%%%%%%%%%%%%%%%%%%%%%%%%%%%%%%%%%%%%%%%%%%%
\begin{abstract}
Hirota's discrete KdV equation is a well-known integrable two-dimensional partial difference equation regarded as a discrete analogue of the KdV equation.
In this paper, we show that a variation of Hirota's discrete KdV equation with an additional parameter admits two types of exact solutions: discrete Painlev\'e transcendent solutions and periodic solutions described by Painlev\'e-type ordinary difference equations.
\end{abstract}
%%%%%%%%%%%%%%%%%%%%%%%%%%%%%%%%%%%%%%%%%%%%%%%%%%%%%%%
%%%%%%%%%%%%%%%%%%%%%%%%%%%%%%%%%%%%%%%%%%%%%%%%%%%%%%%

\subjclass[2020]{
33E17, %Painlev\'e-type functions
%33E30, % Other functions coming from differential, difference and integral equations
34M55,% Painlev'e and other special ordinary differential equations in the complex domain; classification, hierarchies
35Q53, %KdV equations (Korteweg-de Vries equations)
37K10, % Completely integrable infinite-dimensional Hamiltonian and Lagrangian systems, integration methods, integrability tests, integrable hierarchies (KdV, KP, Toda, etc.)
39A13, % Difference equations, scaling ($q$-differences) 
39A14, % Partial difference equations
39A23, % Periodic solutions of difference equations
39A36, % Integrable difference and lattice equations; integrability tests
39A45% Difference equations in the complex domain
}
\keywords{
discrete integrable systems;
discrete KdV equation;
$q$-Painlev\'e equations;
integrable partial difference equation;
$q$-difference equation;
exact solution;
affine Weyl group
}
\maketitle

%%%%%%%%%%%%%%%%%%%%%%%%%%%%%%%%%%%%%%%%%%%%%%%%%%%%%%%%%%%
%% 1. Introduction
%%%%%%%%%%%%%%%%%%%%%%%%%%%%%%%%%%%%%%%%%%%%%%%%%%%%%%%%%%%
\section{Introduction}\label{Introduction}
In this paper, we study special solutions of the autonomous two-dimensional partial difference equation
\begin{equation}\label{eqn:dkdv_ep}
 \ep\, u_{l+1,m+1}-u_{l,m}=\dfrac{\ep}{u_{l,m+1}}-\dfrac{~1~}{u_{l+1,m}},
\end{equation}
where $u_{l,m}\in \bbC$, $(l, m)\in\bbZ^2$ and $\ep\in\bbC^\ast$.
Before discussing Equation \eqref{eqn:dkdv_ep}, we briefly introduce the related partial differential and partial difference equations.

The Korteweg-de Vries (KdV) equation\cite{KDV1895:zbMATH02679684}:
\begin{equation}\label{eqn:KdV}
 w_t+6ww_x+w_{xxx}=0,
\end{equation}
where $w=w(t,x)\in\bbC$ and $(t,x)\in\bbC^2$, is known as a mathematical model for certain shallow water waves, called solitons.
In fact, the KdV equation \eqref{eqn:KdV} admits soliton solutions that describe solitons \cite{ZK1965:zbMATH05826808}.  
The KdV equation \eqref{eqn:KdV} is an important equation that has been extensively studied in physics, engineering, and mathematics, especially in the field of integrable systems (see, {\it e.g.}, \cite{arbarello2002sketches,drazin1989solitons} and references therein).
In 1977 (see \cite{HirotaR1977:MR0460934}), Hirota found the autonomous partial difference equation
\begin{equation}\label{eqn:Hirota_dkdv}
 u_{l+1,m+1}-u_{l,m}=\dfrac{~1~}{u_{l,m+1}}-\dfrac{~1~}{u_{l+1,m}},
\end{equation}
where $u_{l,m}\in \bbC$ and $(l, m)\in\bbZ^2$.
Equation \eqref{eqn:Hirota_dkdv} reduces to the KdV equation \eqref{eqn:KdV} in a continuum limit and is therefore referred to as Hirota's discrete KdV (dKdV) equation.
In the 2000s (see \cite{kajiwara2008bilinearization,KO2009:MR2642628} and references therein), a non-autonomous version of the dKdV equation was discovered:
\begin{equation}\label{eqn:KO_dKdV}
 U_{l+1,m+1}-U_{l,m}
 =\dfrac{q_{m+1}-p_{l}}{U_{l,m+1}}-\dfrac{q_{m}-p_{l+1}}{U_{l+1,m}},
\end{equation}
where $p_l,q_m\in\bbC$ are arbitrary functions of $l$ and $m$, respectively.
The dKdV equation \eqref{eqn:Hirota_dkdv} and Equation \eqref{eqn:KO_dKdV}, like the KdV equation \eqref{eqn:KdV}, also admit soliton solutions (for the dKdV equation \eqref{eqn:Hirota_dkdv} see \cite{hirota1981discrete,ohta1993casorati}; for Equation \eqref{eqn:KO_dKdV} see \cite{KO2009:MR2642628}).

Let us now return to Equation \eqref{eqn:dkdv_ep}.
It is straightforward to verify that by setting $\ep=1$, Equation \eqref{eqn:dkdv_ep} reduces to the dKdV equation \eqref{eqn:Hirota_dkdv}.
Furthermore, by substituting $p_l=0$ and $q_m=\ep^{2m}$ into Equation \eqref{eqn:KO_dKdV}, and letting $U_{l,m}=\ep^m u_{l,m}$, we obtain Equation \eqref{eqn:dkdv_ep}.
Hence, Equation \eqref{eqn:dkdv_ep} can also be regarded as a discrete KdV equation.

In this paper, we demonstrate that Equation \eqref{eqn:dkdv_ep} admits special solutions that can be expressed in terms of solutions to discrete Painlev\'e equations and their higher-order generalizations (collectively referred to in this paper as Painlev\'e-type ordinary difference equations).

%%%%%%%%%%%%%%%%%%%%%%%%%%%%%%
%% Remark
%%%%%%%%%%%%%%%%%%%%%%%%%%%%%%
\begin{remark}\label{Remark:tau}
Although Equation \eqref{eqn:dkdv_ep} is autonomous, the bilinear equation satisfied by its $\tau$-function is non-autonomous.
Indeed, by setting
\begin{equation}
 u_{l,m}=\dfrac{\tau_{l,m}\tau_{l+1,m+1}}{\tau_{l+1,m}\tau_{l,m+1}}
\end{equation}
into Equation \eqref{eqn:dkdv_ep}, it reduces to the following bilinear equation:
\begin{equation}
 \tau_{l+2,m+1}\tau_{l,m}-\tau_{l+2,m}\tau_{l,m+1}-\dfrac{\ga_l}{\ep^m}\tau_{l+1,m+1}\tau_{l+1,m}=0,
\end{equation}
where $\ga_l\in\bbC$ is an arbitrary function of $l$.
This illustrates the intriguing fact that special solutions of the autonomous Equation \eqref{eqn:dkdv_ep} can be described using solutions of non-autonomous Painlev\'e-type ordinary difference equations.
\end{remark}
%%%%%%%%%%%%%%%%%%%%%%%%%%%%%%
%%%%%%%%%%%%%%%%%%%%%%%%%%%%%%

%%%%%%%%%%%%%%%%%%%%%%%%%%%%%%%%%%%%%%%%%%%%%%%%%%%%%%%%%%%
%% 1.1 $q$-Painlev\'e equations
%%%%%%%%%%%%%%%%%%%%%%%%%%%%%%%%%%%%%%%%%%%%%%%%%%%%%%%%%%%
\subsection{$q$-Painlev\'e equations}
In this subsection, we briefly explain the discrete Painlev\'e equations, especially $q$-Painlev\'e equations.

Discrete Painlev\'e equations are a family of second-order nonlinear ordinary difference equations.
Historically (see, e.g., \cite{GRP1991:MR1125950,JS1996:MR1403067}), they were obtained as discrete analogues of the Painlev\'e equations \cite{PainleveP1900:zbMATH02665472,PainleveP1902:MR1554937,PainleveP1907:zbMATH02647172,GambierB1910:MR1555055,FuchsR1905:quelques}, which are a family of second-order nonlinear ordinary differential equations.
There are six Painlev\'e equations; however, an infinite number of discrete Painlev\'e equations are known to exist.
Moreover, there are three discrete types: elliptic, multiplicative, and additive.
Discrete Painlev\'e equations of the multiplicative-type are especially referred to as {\it $q$-Painlev\'e equations}.

The Painlev\'e equations were derived by Painlev\'e {\it et al.} as defining equations for new special functions; thus, their general solutions are referred to as the {\it Painlev\'e transcendents}.
Similarly to the Painlev\'e equations, we also refer to the solutions to discrete Painlev\'e equations as the {\it discrete Painlev\'e transcendents}.
From the point of view of special functions, using the Painlev\'e/discrete Painlev\'e transcendents for solving other differential/difference equations is as important as investigating their properties.

%%
%% 初期値空間
%%
In 2001, Sakai gave the geometric description of discrete Painlev\'e equations, based on types of space of initial values \cite{SakaiH2001:MR1882403}, which is an extension of Okamoto's space of initial values for the Painlev\'e equations \cite{OkamotoK1979:MR614694,OKSO2006:MR2277519}.
As a result, $q$-Painlev\'e equations are classified into nine surface types: 
\[
 A_0^{(1)\ast},~
 A_1^{(1)},~
 A_2^{(1)},~
 A_3^{(1)},~
 A_4^{(1)},~
 A_5^{(1)},~
 A_6^{(1)},~
 A_7^{(1)},~
 A_7^{(1)'},
\]
which respectively relate to the following types of (extended) affine Weyl groups:
\[
 E_8^{(1)},~
 E_7^{(1)},~
 E_6^{(1)},~
 D_5^{(1)},~
 A_4^{(1)},~
 (A_2+A_1)^{(1)},~
 (A_1+A_1)^{(1)},~
 A_1^{(1)},~
 A_1^{(1)}.
\]
Note that infinitely many $q$-Painlev\'e equations exist for the same surface type.
In this paper, a $q$-Painlev\'e equation of $X$-surface type refers to one $q$-Painlev\'e equation belonging to Sakai's space of initial values of type $X$. 
Some typical examples of $q$-Painlev\'e equations of $A_5^{(1)}$-, $A_6^{(1)}$- and $A_7^{(1)}$-surface types are displayed in Appendix \ref{appendix:qP_list}.

%%%%%%%%%%%%%%%%%%%%%%%%%%%%%%%%%%%%%%%%%%%%%%%%%%%%%%%%%%%
%% 1.2 Notation and Terminology
%%%%%%%%%%%%%%%%%%%%%%%%%%%%%%%%%%%%%%%%%%%%%%%%%%%%%%%%%%%
\subsection{Notation and Terminology}
In this subsection, we summarize the notation and terminology used in this paper.

\begin{itemize}
\item 
The Roman letter $\ii$ denotes the imaginary unit, i.e., $\ii = \sqrt{-1}$.
\item
For a given $A \in \mathbb{C}$, when the notation $A^{M/N}$ with $M, N \in \mathbb{Z}$ is used, we choose one branch of $A^{1/|N|}$ such that its $|N|$-th power equals $A$.  
If both notations $A^{M_1/N_1}$ with $M_1, N_1 \in \mathbb{Z}$ and $A^{M_2/N_2}$ with $M_2, N_2 \in \mathbb{Z}$ appear simultaneously, we choose one branch of $A^{1/N_3}$ such that its $N_3$-th power equals $A$, where $N_3$ is the least common multiple of $|N_1|$ and $|N_2|$.
\item
For given transformations $s$ and $w$, the composition of transformations $s \circ w$ is denoted by $sw$.
\end{itemize}

%%%%%%%%%%%%%%%%%%%%%%%%%%%%%%%%%%%%%%%%%%%%%%%%%%%%%%%%%%%
%% 1.3 Outline of the paper
%%%%%%%%%%%%%%%%%%%%%%%%%%%%%%%%%%%%%%%%%%%%%%%%%%%%%%%%%%%
\subsection{Outline of the paper}
This paper is organized as follows.
In \S \ref{section:main}, we present the main results.
In particular, in \S \ref{subsection:dP_sols}, we construct the discrete Painlev\'e transcendent solutions to Equation \eqref{eqn:dkdv_ep}, while in \S \ref{subsection:PR_sols}, we construct the periodic solutions to Equation \eqref{eqn:dkdv_ep}, which are described by Painlev\'e-type ordinary difference equations.
In \S \ref{section:proofs_DPsols}, we provide the proofs of the parts that were left unresolved in the proof of Lemma \ref{lemma:dPtranscendents} in \S \ref{subsection:dP_sols}, using birational representations of extended affine Weyl groups.
Some concluding remarks are given in \S \ref{ConcludingRemarks}.
Appendix \ref{appendix:qP_list} lists some typical examples of $q$-Painlev\'e equations of $A_5^{(1)}$-, $A_6^{(1)}$- and $A_7^{(1)}$-surface types.

%%%%%%%%%%%%%%%%%%%%%%%%%%%%%%%%%%%%%%%%%%%%%%%%%%%%%%%%%%%
%% 2 Main results
%%%%%%%%%%%%%%%%%%%%%%%%%%%%%%%%%%%%%%%%%%%%%%%%%%%%%%%%%%%
\section{Main results}\label{section:main}
In this section, we demonstrate that Equation \eqref{eqn:dkdv_ep} admits two types of special solutions.
In \S\ref{subsection:dP_sols}, we construct special solutions to Equation \eqref{eqn:dkdv_ep}, called discrete Painlev\'e transcendent solutions (see \cite{nakazono2022discrete} for details on discrete Painlev\'e transcendent solutions).
In \S\ref{subsection:PR_sols}, we construct periodic solutions to Equation \eqref{eqn:dkdv_ep}, which are described by Painlev\'e-type ordinary difference equations.

%%%%%%%%%%%%%%%%%%%%%%%%%%%%%%%%%%%%%%%%%%%%%%%%%%%%%%%%%%%
%% 2.1 Discrete Painlev\'e transcendent solutions to Equation \eqref{eqn:dkdv_ep}
%%%%%%%%%%%%%%%%%%%%%%%%%%%%%%%%%%%%%%%%%%%%%%%%%%%%%%%%%%%
\subsection{Discrete Painlev\'e transcendent solutions to Equation \eqref{eqn:dkdv_ep}}\label{subsection:dP_sols}
When Equation \eqref{eqn:KO_dKdV} is specialized to multiplicative-type difference equations, it admits exact solutions described by $q$-Painlev\'e equations, which are called discrete Painlev\'e transcendent solutions (hereafter referred to as dP solutions) \cite{nakazono2022discrete}.
The distinctive feature of the dP solutions is that along each direction for $l\in\bbZ$ and $m\in\bbZ$ they are represented by discrete Painlev\'e transcendents.
In this subsection, we show that Equation \eqref{eqn:dkdv_ep} also admits dP solutions.

We begin by preparing the discrete Painlev\'e transcendents required for the construction of the dP solutions to Equation \eqref{eqn:dkdv_ep} through the following lemma.

%%%%%%%%%%%%%%%%%%%%%%%%%%%%%%
%% Lemma
%%%%%%%%%%%%%%%%%%%%%%%%%%%%%%
\begin{lemma}[Discrete Painlev\'e transcendents]\label{lemma:dPtranscendents}
The following hold.
\begin{itemize}
%%A6
\item[\rm (i)]
Let $F_{0,0},G_{0,0},\al_0,\be_0,\ep\in\bbC$ and define $\al_l,\be_m\in\bbC$ as follows:
\begin{equation}
 \al_l=\ep^l\al_0,\quad
 \be_m=\ep^m\be_0.
\end{equation}
Furthermore, let $F_{l,m},G_{l,m}\in\bbC$ satisfy the following ordinary difference equation in the $l$-direction:
\begin{equation}\label{eqn:A6_sol_l}
 F_{l+1,m}F_{l,m}=\dfrac{G_{l+1,m}+1}{{\al_{l+1}}^2G_{l+1,m}},\quad
 G_{l+1,m}G_{l,m}=\dfrac{{\be_m}^2(F_{l,m}+1)}{{\al_{l+1}}^2F_{l,m}},
\end{equation}
and the following ordinary difference equation in the $m$-direction:
\begin{equation}\label{eqn:A6_sol_m}
 G_{l,m+1}G_{l,m}=\dfrac{{\be_m}^2(H_{l,m+1}+{\al_l}^2)}{H_{l,m+1}({\be_m}^2H_{l,m+1}+{\al_l}^2)},\quad
 H_{l,m+1}H_{l,m}=\dfrac{{\al_l}^2}{G_{l,m}(G_{l,m}+1)},
\end{equation}
where
\begin{equation}\label{eqn:A6_sol_FGH}
 H_{l,m}=\dfrac{1}{F_{l,m}G_{l,m}}.
\end{equation}
Then, $F_{l,m}$ and $G_{l,m}$ are uniquely determined as rational functions of $F_{0,0}$ and $G_{0,0}$ over $\bbC(\al_0,\be_0,\ep)$.

Note that Equations \eqref{eqn:A6_sol_l} and \eqref{eqn:A6_sol_m} are $q$-Painlev\'e equations of $A_6^{(1)}$-surface type. 
In fact, Equation \eqref{eqn:A6_sol_l} is equivalent to \qP{II} \eqref{eqn:appendix_A6_1} and
Equation \eqref{eqn:A6_sol_m} is equivalent to \qPD \eqref{eqn:appendix_A6_2}.
(See Appendix \ref{appendix:qP_list} for details.)
Hence, we here refer to the functions $F_{l,m}$, $G_{l,m}$ and $H_{l,m}$ as discrete Painlev\'e transcendents of $A_6^{(1)}$-surface type.
%%A5
\item[\rm (ii)]
Let $F_{0,0},G_{0,0},\al_0,\be_0,\ga,\ep\in\bbC$ and define $\al_l,\be_m\in\bbC$ as follows:
\begin{equation}
 \al_l=\ep^l\al_0,\quad
 \be_m=\ep^m\be_0.
\end{equation}
Furthermore, let $F_{l,m},G_{l,m}\in\bbC$ satisfy the following ordinary difference equation in the $l$-direction:
\begin{equation}\label{eqn:A5_sol_l}
 \begin{cases}
 \dfrac{F_{l+1,m}}{G_{l,m}}=\dfrac{{\be_m}^2(1+H_{l,m}+H_{l,m}F_{l,m})}{1+F_{l,m}+F_{l,m}G_{l,m}},\\[1em]
 \dfrac{G_{l+1,m}}{H_{l,m}}=\dfrac{1+F_{l,m}+F_{l,m}G_{l,m}}{\ga^2(1+G_{l,m}+G_{l,m}H_{l,m})},\\[1em]
 \dfrac{H_{l+1,m}}{F_{l,m}}=\dfrac{\ep^2\ga^2(1+G_{l,m}+G_{l,m}H_{l,m})}{{\be_m}^2(1+H_{l,m}+H_{l,m}F_{l,m})},
 \end{cases}
\end{equation}
and  the following ordinary difference equation in the $m$-direction:
\begin{equation}\label{eqn:A5_sol_m}
\begin{cases}
 F_{l,m+1}F_{l,m}=\dfrac{{\al_l}^2{\be_m}^2\ga^2(1+H_{l,m+1})}{H_{l,m+1}(\ga^2+{\be_m}^2H_{l,m+1})},\\[1em]
 H_{l,m+1}H_{l,m}=\dfrac{{\al_l}^2\ga^2({\be_m}^2+F_{l,m})}{{\be_m}^2 F_{l,m}(1+F_{l,m})}.
\end{cases}
\end{equation}
Here, $H_{l,m}$ is defined as
\begin{equation}
 H_{l,m}=\dfrac{{\al_l}^2}{F_{l,m}G_{l,m}}.
\end{equation}
Then, $F_{l,m}$ and $G_{l,m}$ are uniquely determined as rational functions of $F_{0,0}$ and $G_{0,0}$ over $\bbC(\al_0,\be_0,\ga,\ep)$.

Note that Equations \eqref{eqn:A5_sol_l} and \eqref{eqn:A5_sol_m} are $q$-Painlev\'e equations of $A_5^{(1)}$-surface type. 
In fact, Equation \eqref{eqn:A5_sol_l} is equivalent to \qP{IV} \eqref{eqn:appendix_A5_2} and
Equation \eqref{eqn:A5_sol_m} is equivalent to \qP{III} \eqref{eqn:appendix_A5_1}.
(See Appendix \ref{appendix:qP_list} for details.)
Hence, we here refer to the functions $F_{l,m}$, $G_{l,m}$ and $H_{l,m}$ as discrete Painlev\'e transcendents of $A_5^{(1)}$-surface type.
\end{itemize}
\end{lemma}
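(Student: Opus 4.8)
The plan is to reduce the statement to two facts. First, in each of the two directions the evolution defined by the system — together with the auxiliary relation \eqref{eqn:A6_sol_FGH} in part (i), resp. the defining relation for $H_{l,m}$ in part (ii) — is a \emph{birational} transformation of the pair $(F,G)$ over the stated field of parameters. Second, the $l$-direction evolution and the $m$-direction evolution \emph{commute}. Granting these, one fills the lattice $\bbZ^2$ by starting at $(0,0)$ and applying unit shifts along an arbitrary path to $(l,m)$: birationality makes each step a well-defined rational map with rational inverse, commutativity makes the outcome path-independent, and composing rational maps over $\bbC(\al_0,\be_0,\ep)$ (resp. $\bbC(\al_0,\be_0,\ga,\ep)$) yields $F_{l,m},G_{l,m}$ as rational functions of $F_{0,0},G_{0,0}$; uniqueness is immediate from the step-by-step construction.

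Birationality is a direct computation. In part (i), the second equation of \eqref{eqn:A6_sol_l} expresses $G_{l+1,m}$ rationally in $F_{l,m}$, after which the first expresses $F_{l+1,m}$ rationally in $G_{l+1,m}$; solving the same two equations the other way gives $F_{l,m},G_{l,m}$ rationally in $F_{l+1,m},G_{l+1,m}$, so the $l$-shift is birational, and likewise the $m$-shift from \eqref{eqn:A6_sol_m} after eliminating $H$ via \eqref{eqn:A6_sol_FGH}. Part (ii) is the same, now on the $(F,G)$-plane after eliminating $H_{l,m}=\al_l^2/(F_{l,m}G_{l,m})$: each of \eqref{eqn:A5_sol_l} and \eqref{eqn:A5_sol_m} resolves into a birational self-map of that plane. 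One should also check that no denominator occurring is identically zero as a rational function, which again follows from the explicit inverses.

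The substantive point — and the one deferred to \S\ref{section:proofs_DPsols} — is commutativity of the two evolutions. The strategy there is to recognize each system as the action of a translation element in a birational representation of an extended affine Weyl group: for part (i) the group $\widetilde{W}\big((A_1+A_1)^{(1)}\big)$ associated, via Sakai's classification, with the $A_6^{(1)}$ surface, and for part (ii) the group $\widetilde{W}\big((A_2+A_1)^{(1)}\big)$ associated with the $A_5^{(1)}$ surface. Concretely, one introduces the standard coordinates and root-lattice parameters of the known representation, exhibits two independent commuting translations $T_l$ and $T_m$, and produces an explicit change of variables identifying $(F_{l,m},G_{l,m},H_{l,m})$ and the parameters $(\al_l,\be_m,\ga,\ep)$ with the orbit of those coordinates under $T_l^{\,l}T_m^{\,m}$ — so that $T_l$ reproduces the $l$-system (\qP{II} in case (i), \qP{IV} in case (ii)) and $T_m$ reproduces the $m$-system (\qPD in case (i), \qP{III} in case (ii)), with parameter evolution $\al_l=\ep^l\al_0$, $\be_m=\ep^m\be_0$. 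Once this dictionary is in place, $T_lT_m=T_mT_l$ in the group immediately gives the lattice compatibility.

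I expect the main obstacle to be exactly the construction of that dictionary: matching the $F,G,H$ here to a convenient set of variables (for instance $\tau$-ratios, in the spirit of Remark \ref{Remark:tau}, or the $f,g$-type variables in which these $q$-Painlev\'e equations are usually written), tracking how the multiplicative parameter of the representation specializes — here to $\ep$, with the squares in \eqref{eqn:A6_sol_l}--\eqref{eqn:A5_sol_m} signalling that it is really $\ep^2$ that plays the geometric role — and, crucially, verifying that $T_l$ and $T_m$ are genuinely distinct translations spanning a rank-two sublattice rather than being proportional, so that the two-parameter lattice of solutions is non-degenerate. I would handle this by first pinning the $m$-direction system against the standard forms of \qPD{} / \qP{III} recalled in Appendix \ref{appendix:qP_list}, then determining $T_l$ as the complementary translation and deriving \eqref{eqn:A6_sol_l} / \eqref{eqn:A5_sol_l} from its action, and finally reading off the parameter evolution to confirm the $\ep$-scalings.
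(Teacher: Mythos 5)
Your proposal is correct and follows essentially the same route as the paper: iterate the two difference systems to express $F_{l,m},G_{l,m}$ rationally in the initial data, then reduce path-independence to the commutativity of two translations $T_1,T_2$ in the birational representations of $\widetilde{W}\bigl((A_1+A_1)^{(1)}\bigr)$ (for (i)) and $\widetilde{W}\bigl((A_2+A_1)^{(1)}\bigr)$ (for (ii)), exactly as the paper does in \S\ref{section:proofs_DPsols}. The only cosmetic difference is that you phrase the elementary step in terms of birational self-maps and arbitrary lattice paths, while the paper iterates first in $l$ and then in $m$ and cites commutativity for order-independence.
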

%%%%%%%%%%%%%%%%%%%%%%%%%%%%%%
% proof
%%%%%%%%%%%%%%%%%%%%%%%%%%%%%%
\begin{proof}
Let us consider the case of (i).
By repeatedly using Equation \eqref{eqn:A6_sol_l}, we can directly show that $F_{l,m},G_{l,m}\in K(F_{0,m},G_{0,m})$.
Similarly, by repeatedly using Equation \eqref{eqn:A6_sol_m}, we can directly show that $G_{0,m},H_{0,m}\in K(G_{0,0},H_{0,0})$, where $K=\bbC(\al_0,\be_0,\ep)$.
These results, together with Equation \eqref{eqn:A6_sol_FGH}, imply that $F_{l,m},G_{l,m}\in K(F_{0,0},G_{0,0})$.
We will show, in \S \ref{subsection:proof_A6}, that the order of using Equations \eqref{eqn:A6_sol_l} and \eqref{eqn:A6_sol_m}  to compute $F_{l,m}$ and $G_{l,m}$ does not affect the result.

The argument for (ii) proceeds analogously to that of (i).
The proof that the order in which Equations \eqref{eqn:A5_sol_l} and \eqref{eqn:A5_sol_m} are used does not affect the result is given in \S \ref{subsection:proof_A5}. 
Hence, the proof is complete.
\end{proof}
%%%%%%%%%%%%%%%%%%%%%%%%%%%%%%
%%%%%%%%%%%%%%%%%%%%%%%%%%%%%%

We are now ready to show dP solutions to Equation \eqref{eqn:dkdv_ep}. 

%%%%%%%%%%%%%%%%%%%%%%%%%%%%%%
%% Theorem:
%%%%%%%%%%%%%%%%%%%%%%%%%%%%%%
\begin{theorem}\label{theorem:dPsolutions}
The following hold.
\begin{itemize}
%%A6
\item[\rm (i)]
Equation \eqref{eqn:dkdv_ep} admits the following exact solution:
\begin{equation}\label{eqn:A6_dPsol}
 u_{l,m}=\dfrac{1}{\ep^{1/2}\al_lF_{l,m}},
\end{equation}
where $\al_l\in\bbC$ is the paramter given in Lemma \ref{lemma:dPtranscendents} (i), and $F_{l,m}$ is the discrete Painlev\'e transcendent of $A_6^{(1)}$-surface type given therein.
%%A5
\item[\rm (ii)]
Equation \eqref{eqn:dkdv_ep} admits the following exact solution:
\begin{equation}\label{eqn:A5_dPsol}
 u_{l,m}=\dfrac{\ep^{1/2}\ga\, G_{l, m}}{\al_l},
\end{equation}
where $\al_l,\ga\in\bbC$ are the paramters given in Lemma \ref{lemma:dPtranscendents} (ii) and $G_{l,m}$ is the discrete Painlev\'e transcendent of $A_5^{(1)}$-surface type given therein.
\end{itemize}
\end{theorem}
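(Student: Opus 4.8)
The plan is to verify directly that the candidate formulas \eqref{eqn:A6_dPsol} and \eqref{eqn:A5_dPsol} satisfy Equation \eqref{eqn:dkdv_ep}, using only the $q$-Painlev\'e equations \eqref{eqn:A6_sol_l}--\eqref{eqn:A6_sol_FGH} (resp.\ \eqref{eqn:A5_sol_l}--\eqref{eqn:A5_sol_m}) together with the geometric parameter evolution $\al_l=\ep^l\al_0$, $\be_m=\ep^m\be_0$; no further input beyond Lemma \ref{lemma:dPtranscendents} is needed, since that lemma already guarantees that $F_{l,m}$ and $G_{l,m}$ are well-defined functions on $\bbZ^2$ obtained by iterating the two commuting systems. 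Thus the theorem reduces to an algebraic identity, and the work is to express every quantity appearing in \eqref{eqn:dkdv_ep} --- namely $u_{l,m}$, $u_{l+1,m}$, $u_{l,m+1}$, $u_{l+1,m+1}$ --- in terms of a single pair of transcendents, say $F_{l,m},G_{l,m}$ (and their companion $H_{l,m}$), evaluated at the base point $(l,m)$.

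For part (i), substituting \eqref{eqn:A6_dPsol} into \eqref{eqn:dkdv_ep} and clearing the $\ep^{1/2}$ factors, the equation to prove becomes
\begin{equation}
 \frac{\ep}{\al_{l+1}F_{l+1,m+1}}-\frac{1}{\al_lF_{l,m}}
 =\frac{\al_lF_{l,m+1}}{\ep^{1/2}}\cdot\frac{1}{\ep^{1/2}}-\ep^{1/2}\al_{l+1}F_{l+1,m}\cdot\frac{1}{\ep^{1/2}},
\end{equation}
i.e.\ after using $\al_{l+1}=\ep\,\al_l$ and $\al_l=\ep^l\al_0$, an identity purely among $F_{l,m},F_{l+1,m},F_{l,m+1},F_{l+1,m+1}$ and $\al_0,\ep$. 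The first step is to eliminate the shifted $F$'s: the $l$-system \eqref{eqn:A6_sol_l} gives $F_{l+1,m}$ rationally in terms of $F_{l,m},G_{l,m}$ (via $G_{l+1,m}$, itself rational in $F_{l,m},G_{l,m}$), and likewise the $m$-system \eqref{eqn:A6_sol_m} together with \eqref{eqn:A6_sol_FGH} gives $F_{l,m+1}$; composing one step in each direction (in either order, legitimately, by the commutativity proved in \S\ref{subsection:proof_A6}) yields $F_{l+1,m+1}$. After these substitutions both sides are rational functions of $F_{l,m},G_{l,m},\al_l,\be_m,\ep$, and the claim is that they agree; one checks this by clearing denominators and simplifying. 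The analogous computation handles part (ii): insert \eqref{eqn:A5_dPsol} into \eqref{eqn:dkdv_ep}, use $\al_{l+1}=\ep\al_l$ to reduce to an identity in $G_{l,m},G_{l+1,m},G_{l,m+1},G_{l+1,m+1}$, eliminate the shifts using \eqref{eqn:A5_sol_l}, \eqref{eqn:A5_sol_m} (and the definition of $H_{l,m}$), and verify the resulting rational identity.

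The main obstacle is not conceptual but the bookkeeping of the double shift $u_{l+1,m+1}$: one must compute $F_{l+1,m+1}$ (resp.\ $G_{l+1,m+1}$) as an explicit rational function of the base-point data, and the intermediate expressions are unwieldy because \eqref{eqn:A6_sol_l} and \eqref{eqn:A6_sol_m} are each implicit (they give products $F_{l+1,m}F_{l,m}$, not $F_{l+1,m}$ alone), so solving for the forward shift introduces the companion variable each time. A useful device to keep this manageable is to work on the $\tau$-function level suggested in Remark \ref{Remark:tau}: writing $u_{l,m}=\tau_{l,m}\tau_{l+1,m+1}/(\tau_{l+1,m}\tau_{l,m+1})$, Equation \eqref{eqn:dkdv_ep} becomes the bilinear form displayed there, and the $q$-Painlev\'e transcendents of $A_6^{(1)}$- and $A_5^{(1)}$-type admit $\tau$-function descriptions in which the Hirota bilinear relations are built in; checking the bilinear identity for the appropriate $\tau$ is then a shorter and more transparent computation than the direct nonlinear one. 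Either way, once the single required algebraic identity is confirmed, the theorem follows immediately, since Lemma \ref{lemma:dPtranscendents} already supplies existence and uniqueness of the $F_{l,m},G_{l,m}$ involved.
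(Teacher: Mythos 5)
Your proposal is a direct substitute-and-verify argument using the two commuting $q$-Painlev\'e systems \eqref{eqn:A6_sol_l}, \eqref{eqn:A6_sol_m} and the constraint \eqref{eqn:A6_sol_FGH} (resp.\ \eqref{eqn:A5_sol_l}, \eqref{eqn:A5_sol_m}), which is exactly the route the paper takes; the only cosmetic difference is that you push every shift down to the base point and check one rational identity, whereas the paper stages the reduction so that the final relation is recognized as the $m$-direction system itself. Note only that the displayed intermediate identity in your part (i) has incorrect powers of $\ep$ on the right-hand side (it should read $\ep^2\al_lF_{l,m+1}-\ep\,\al_{l+1}F_{l+1,m}$ after clearing $\ep^{1/2}$), a bookkeeping slip that does not affect the method.
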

%%%%%%%%%%%%%%%%%%%%%%%%%%%%%%
% proof
%%%%%%%%%%%%%%%%%%%%%%%%%%%%%%
\begin{proof}
Let us consider the case of (i).
The validity of (i) can be directly verified, for example, by the following steps:
\begin{itemize}
\item[Step 1:]
Using Equation \eqref{eqn:A6_dPsol} in Equation \eqref{eqn:dkdv_ep}, we obtain a relation between\\
$\{F_{l,m},F_{l+1,m},F_{l,m+1},F_{l+1,m+1}\}$.
\item[Step 2:]
Using Equation \eqref{eqn:A6_sol_l} in the relation obtained in Step 1, we obtain a relation between $\{F_{l,m},F_{l,m+1},G_{l,m},G_{l,m+1}\}$.
\item[Step 3:]
Using Equation \eqref{eqn:A6_sol_FGH} in the relation obtained in Step 2, we obtain a relation between $\{G_{l,m},G_{l,m+1},H_{l,m},H_{l,m+1}\}$.
\item[Step 4:]
The relation obtained in Step 3 can be solved by Equation \eqref{eqn:A6_sol_m}.
\end{itemize}

The same direct calculation can be applied to verify (ii). 
Hence, the proof is complete.
\end{proof}
%%%%%%%%%%%%%%%%%%%%%%%%%%%%%%
%%%%%%%%%%%%%%%%%%%%%%%%%%%%%%

%%%%%%%%%%%%%%%%%%%%%%%%%%%%%%%%%%%%%%%%%%%%%%%%%%%%%%%%%%%
%% 2.2 Periodic solutions to Equation \eqref{eqn:dkdv_ep}
%%%%%%%%%%%%%%%%%%%%%%%%%%%%%%%%%%%%%%%%%%%%%%%%%%%%%%%%%%%
\subsection{Periodic solutions to Equation \eqref{eqn:dkdv_ep}}\label{subsection:PR_sols}
It is known that general solutions of Painlev\'e-type ordinary difference equations can sometimes be expressed as special solutions of integrable two-dimensional difference equations that satisfy suitable periodic conditions.  
Conversely, special solutions of integrable two-dimensional difference equations under appropriate periodicity constraints can be described in terms of general solutions of Painlev\'e-type ordinary difference equations.
(See, for instance, \cite{GRSWC2005:MR2117991,HHJN2007:MR2303490,OrmerodCM2012:MR2997166,OrmerodCM2014:MR3210633,OVQ2013:MR3030178}).
In this subsection, we demonstrate that special solutions of Equation~\eqref{eqn:dkdv_ep} satisfying the periodic condition $u_{l+k,m+1}=u_{l,m}$ $(k\in\bbZ_{\geq 2})$ can be expressed in terms of solutions of Painlev\'e-type ordinary difference equations.

%%%%%%%%%%%%%%%%%%%%%%%%%%%%%%
%% Theorem
%%%%%%%%%%%%%%%%%%%%%%%%%%%%%%
\begin{theorem}\label{theorem:PRsolutions}
Let $h\in\bbZ_{\geq 1}$.
The following hold.
\begin{itemize}
%%A7
\item[\rm (i)]
Equation \eqref{eqn:dkdv_ep} admits the following exact solution:
\begin{equation}\label{eqn:A7_PRsol}
 u_{l,m}=\dfrac{\ii}{\prod_{k=0}^{2h-1}X_{l-2hm+k}},
\end{equation}
which satisfies the periodic condition
\begin{equation}
 u_{l+2h,m+1}=u_{l,m}.
\end{equation}
Here, $X_n$ satisfies the following ordinary difference equation:
\begin{equation}\label{eqn:A7_higher}
 X_{n+2h}X_n=-\dfrac{1}{\prod_{k=1}^{2h-1}X_{n+k}}\left(\dfrac{1}{\prod_{k=1}^{2h-1}X_{n+k}}+\ga_n\right),
\end{equation}
where $\ga_n\in\bbC$ is a parameter satisfying
\begin{equation}
 \ga_{n+2h}=\ep \ga_n.
\end{equation}
%%A6
\item[\rm (ii)]
Equation \eqref{eqn:dkdv_ep} admits the following exact solution:
\begin{equation}\label{eqn:A6_PRsol}
 u_{l,m}=\dfrac{\ii}{\left(\prod_{k=1}^hX_{l-(2h+1)m+2k-1}\right)Y_{l-(2h+1)m}},
\end{equation}
which satisfies the periodic condition
\begin{equation}
 u_{l+2h+1,m+1}=u_{l,m}.
\end{equation}
Here, $X_n$ and $Y_n$ satisfy the following system of ordinary difference equations:
\begin{subequations}\label{eqns:A6_higher}
\begin{align}
 &Y_{n+1}Y_n=X_n,\label{eqn:A6_higher_1}\\
 &X_{n+2h}X_n=-\dfrac{1}{\prod_{k=1}^{h-1}X_{n+2k}}\left(\dfrac{1}{\prod_{k=0}^{h-1}X_{n+2k+1}}+\ga_n\right),\label{eqn:A6_higher_2}
\end{align}
\end{subequations}
where $\ga_n\in\bbC$ is a parameter satisfying
\begin{equation}
 \ga_{n+2h+1}=\ep \ga_n.
\end{equation}
\end{itemize}
\end{theorem}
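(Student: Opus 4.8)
The plan is to prove both parts by direct substitution, exactly as in the proof of Theorem~\ref{theorem:dPsolutions}, using the fact that each ansatz depends on $(l,m)$ only through one diagonal variable, so that the claimed periodicity is automatic and \eqref{eqn:dkdv_ep} collapses to a relation among consecutive values of the auxiliary sequences. For part~(i) I would set $n=l-2hm$ and introduce the block product $P_n=\prod_{k=0}^{2h-1}X_{n+k}$, so that \eqref{eqn:A7_PRsol} reads $u_{l,m}=\ii/P_n$ and $u_{l+2h,m+1}=u_{l,m}$ holds because $(l+2h)-2h(m+1)=l-2hm$. Substituting the shifts $u_{l+1,m}=\ii/P_{n+1}$, $u_{l,m+1}=\ii/P_{n-2h}$, $u_{l+1,m+1}=\ii/P_{n-2h+1}$ into \eqref{eqn:dkdv_ep} and clearing the overall factor $\ii$ (using $1/\ii=-\ii$), one is reduced to the one-variable identity
\[
 \ep\left(P_{n-2h}+\frac{1}{P_{n-2h+1}}\right)=P_{n+1}+\frac{1}{P_n}.
\]

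To handle this, I would rewrite \eqref{eqn:A7_higher} in a form adapted to $P_n$. With $M_n=\prod_{k=1}^{2h-1}X_{n+k}$ one has $P_n=M_nX_n$ and $P_{n+1}=M_nX_{n+2h}$, so multiplying \eqref{eqn:A7_higher} by $M_n^2$ yields the quadratic relation $P_nP_{n+1}=-1-\ga_nM_n$. From this, $P_n+1/P_{n+1}=-\ga_nM_n/P_{n+1}=-\ga_n/X_{n+2h}$ and $P_{n+1}+1/P_n=-\ga_n/X_n$. Evaluating the first identity at $n-2h$ and the second at $n$, the displayed equation becomes $\ep\,\ga_{n-2h}=\ga_n$, which is precisely the assumed periodicity $\ga_{n+2h}=\ep\ga_n$. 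This proves part~(i).

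Part~(ii) proceeds along the same lines with $n=l-(2h+1)m$ and the block product $\Pi_n=Y_n\prod_{k=1}^hX_{n+2k-1}$; the periodicity is again automatic, and the substitution into \eqref{eqn:dkdv_ep} reduces to $\ep(\Pi_{n-2h-1}+1/\Pi_{n-2h})=\Pi_{n+1}+1/\Pi_n$. The preliminary computations I would carry out are: using \eqref{eqn:A6_higher_1} to get $\Pi_n\Pi_{n+1}=X_n\prod_{j=1}^{2h}X_{n+j}$; multiplying \eqref{eqn:A6_higher_2} by $\prod_{j=1}^{2h-1}X_{n+j}$ to obtain the quadratic relation $\Pi_n\Pi_{n+1}=-1-\ga_nB_n$ with $B_n=\prod_{k=0}^{h-1}X_{n+2k+1}$; and observing $\Pi_n=Y_nB_n$. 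These give $\Pi_{n+1}+1/\Pi_n=-\ga_n/Y_n$ and $\Pi_n+1/\Pi_{n+1}=-\ga_nB_n/\Pi_{n+1}=-\ga_n/Y_{n+2h+1}$, where the last step uses the identity $B_nY_{n+2h+1}=\Pi_{n+1}$. Substituting these into the displayed equation collapses it to $\ep\,\ga_{n-2h-1}=\ga_n$, i.e.\ to the assumed $\ga_{n+2h+1}=\ep\ga_n$.

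All the algebra here is elementary, so the real work is index bookkeeping. The hard part will be part~(ii): picking the right auxiliary products and, above all, establishing $B_nY_{n+2h+1}=\Pi_{n+1}$, which one proves by replacing every $X_j$ with $Y_{j+1}Y_j$ via \eqref{eqn:A6_higher_1} and telescoping the resulting products of $Y$'s. The subtlety is that $2h+1$ is odd, so the $X$-subscripts on the two sides of this identity have opposite parity and the simplification only closes up after passing entirely to the $Y$-variables.
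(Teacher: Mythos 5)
Your proposal is correct --- I checked the key identities ($P_nP_{n+1}=-1-\ga_n M_n$, $\Pi_n\Pi_{n+1}=-1-\ga_n B_n$, and the telescoping relation $B_nY_{n+2h+1}=\Pi_{n+1}$) as well as the final reductions to $\ep\,\ga_{n-2h}=\ga_n$ and $\ep\,\ga_{n-2h-1}=\ga_n$ --- and it takes essentially the same route as the paper: substitute the diagonal ansatz, note the periodicity is automatic, and verify that the resulting relation among block products is a consequence of the ordinary difference equation together with the $\ga$ recursion. The only difference is organizational: the paper eliminates the highest-index term $X_{n+4h}$ (and, in part (ii), first passes to the pure-$Y$ equation via $X_n=Y_{n+1}Y_n$), whereas you package the same cancellation into the quadratic relations for $P_n$ and $\Pi_n$, which is a slightly cleaner bookkeeping of the identical computation.
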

%%%%%%%%%%%%%%%%%%%%%%%%%%%%%%
% proof
%%%%%%%%%%%%%%%%%%%%%%%%%%%%%%
\begin{proof}
Let us first prove the case of (i).
Using Equation \eqref{eqn:A7_PRsol} in Equation  \eqref{eqn:dkdv_ep}, we obtain
\begin{equation}\label{eqn:proof_PR_1}
 \dfrac{\ep}{\prod_{k=0}^{2h-1}X_{n+k+1}}
 -\dfrac{1}{\prod_{k=0}^{2h-1}X_{n+k+2h}}
 =-\ep \left(\prod_{k=0}^{2h-1}X_{n+k}\right)
 +X_{n+4h}\left(\prod_{k=0}^{2h-2}X_{n+k+2h+1}\right),
\end{equation}
where $n=l-2h(m+1)$.
By replacing $n$ with $n+2h$ in Equation \eqref{eqn:A7_higher}, we obtain
\begin{equation}\label{eqn:proof_PR_11}
 X_{n+4h}=-\dfrac{1}{\prod_{k=0}^{2h-1}X_{n+k+2h}}\left(\dfrac{1}{\prod_{k=1}^{2h-1}X_{n+k+2h}}+\ep\ga_n\right).
\end{equation}
Equation \eqref{eqn:proof_PR_11} enables us to eliminate $X_{n+4h}$ from Equation \eqref{eqn:proof_PR_1}, thereby recovering Equation \eqref{eqn:A7_higher}.
This completes the proof of (i).

Next, we prove (ii).
From the system \eqref{eqns:A6_higher}, the following difference equation for $Y_n$ is derived:
\begin{equation}\label{eqn:A6_higher_3}
 Y_{n+2h+1}Y_n=-\dfrac{1}{\prod_{k=1}^{2h}Y_{n+k}}\left(\dfrac{1}{\prod_{k=1}^{2h}Y_{n+k}}+\ga_n\right).
\end{equation}
Using Equation \eqref{eqn:A6_PRsol} in Equation  \eqref{eqn:dkdv_ep}, we obtain
\begin{align}
 &\dfrac{\ep}{\left(\prod_{k=1}^hX_{n+2k}\right)Y_{n+1}}
 -\dfrac{1}{\left(\prod_{k=1}^hX_{n+2k+2h}\right)Y_{n+2h+1}}\notag\\
 &=-\ep\left(\prod_{k=1}^hX_{n+2k-1}\right)Y_{n}
 +\left(\prod_{k=1}^hX_{n+2k+2h+1}\right)Y_{n+2h+2},
\end{align}
where $n=l-(2h+1)(m+1)$.
Then, rewriting the above equation by eliminating $X_n$ using Equation \eqref{eqn:A6_higher_1}, we obtain 
\begin{equation}\label{eqn:proof_PR_2}
 \dfrac{\ep}{\prod_{k=0}^{2h}Y_{n+k+1}}
 -\dfrac{1}{\prod_{k=0}^{2h}Y_{n+2h+k+1}}
 =-\ep\left(\prod_{k=0}^{2h}Y_{n+k}\right)
 +Y_{n+4h+2}\left(\prod_{k=0}^{2h-1}Y_{n+2h+k+2}\right).
\end{equation}
By replacing $n$ with $n+2h+1$ in Equation \eqref{eqn:A6_higher_3}, we obtain
\begin{equation}\label{eqn:proof_PR_22}
 Y_{n+4h+2}=-\dfrac{1}{\prod_{k=0}^{2h}Y_{n+2h+k+1}}\left(\dfrac{1}{\prod_{k=1}^{2h}Y_{n+2h+k+1}}+\ep\ga_n\right).
\end{equation}
Equation \eqref{eqn:proof_PR_22} enables us to eliminate $Y_{n+4h+2}$ from Equation \eqref{eqn:proof_PR_2}, thereby recovering Equation \eqref{eqn:A6_higher_3}.
Therefore, (ii) is proved.
\end{proof}
%%%%%%%%%%%%%%%%%%%%%%%%%%%%%%
%%%%%%%%%%%%%%%%%%%%%%%%%%%%%%

%%%%%%%%%%%%%%%%%%%%%%%%%%%%%%
%% Remark A7
%%%%%%%%%%%%%%%%%%%%%%%%%%%%%%
\begin{remark}
Equation \eqref{eqn:A7_higher} is equivalent to Equation (B.22) in \cite{nakazono2024higerA4}, which is a higher-order generalization of a $q$-Painlev\'e equation of $A_7^{(1)}$-surface type.
Thus, Equation \eqref{eqn:A7_higher} can be rewritten as the Painlev\'e-type ordinary difference equation given in \cite{Okubo2017:arxiv1704.05403,MOT2021:Cluster}:
\begin{equation}\label{eqn:Okubo_1}
 x_{l+2h}x_l=t_l\left(\prod_{k=1}^{2h-1}\dfrac{x_{l+k}+1}{{x_{l+k}}^2}\right),
\end{equation}
where $t_l=p^lt_0$ and $t_0,p\in\bbC^\ast$.
(For details on the correspondence between Equations \eqref{eqn:A7_higher} and \eqref{eqn:Okubo_1}, see \S B.3.1 of \cite{nakazono2024higerA4}.)
Note that when $h=1$, Equation \eqref{eqn:Okubo_1} is equivalent to \qP{I}\,$(A_7^{(1)})$ \eqref{eqn:appendix_A7_0}.
\end{remark}
%%%%%%%%%%%%%%%%%%%%%%%%%%%%%%
%% Remark A6
%%%%%%%%%%%%%%%%%%%%%%%%%%%%%%
\begin{remark}
Equation \eqref{eqn:A6_higher_2} is equivalent to Equation (B.32) in \cite{nakazono2024higerA4}, which is a higher-order generalization of a $q$-Painlev\'e equation of $A_6^{(1)}$-surface type, with the parameter $\la(l)$ specialized to $\la(l)=1$.
Thus, Equation \eqref{eqn:A6_higher_2} can be rewritten as the Painlev\'e-type ordinary difference equation given in \cite{Okubo2017:arxiv1704.05403,MOT2021:Cluster}:
\begin{equation}\label{eqn:Okubo_2}
 x_{l+2h}x_l=c^{(-1)^l}t_l\dfrac{\prod_{k=0}^{h-1}\left(x_{l+2k+1}+1\right)}{\prod_{k=1}^{2h-1}x_{l+k}},
\end{equation}
where $t_l=p^lt_0$ and $c,t_0,p\in\bbC^\ast$, in the case $c=1$.
(For details on the correspondence between Equations \eqref{eqn:A6_higher_2} and \eqref{eqn:Okubo_2}, see \S B.3.2 of \cite{nakazono2024higerA4}.)
Note that when $h=1$, Equation \eqref{eqn:Okubo_2} is equivalent to \qP{II} \eqref{eqn:appendix_A6_1}, and the correspondence is given as follows:
\begin{subequations}
\begin{align}
 &g=x_{2l},\quad
 f=x_{2l+1},\quad
 \overline{g}=x_{2l+2},\quad
 \overline{f}=x_{2l+3},\quad
 t=\dfrac{c}{t_{2l+1}},\\
 &c_1=\dfrac{c^2}{p},\quad
 q=\dfrac{1}{p^2}.
\end{align}
\end{subequations}
Moreover, when $h=1$ and $c=1$, Equation \eqref{eqn:Okubo_2} is equivalent to \qP{I}\,$(A_6^{(1)})$ \eqref{eqn:appendix_A6_0}.
\end{remark}
%%%%%%%%%%%%%%%%%%%%%%%%%%%%%%
%%%%%%%%%%%%%%%%%%%%%%%%%%%%%%

%%%%%%%%%%%%%%%%%%%%%%%%%%%%%%%%%%%%%%%%%%%%%%%%%%%%%%%%%%%
%% 3 Proofs of the unresolved parts in the proof of Lemma \ref{lemma:dPtranscendents}
%%%%%%%%%%%%%%%%%%%%%%%%%%%%%%%%%%%%%%%%%%%%%%%%%%%%%%%%%%%
\section{Proofs of the unresolved parts in the proof of Lemma \ref{lemma:dPtranscendents}}\label{section:proofs_DPsols}
In this section, we provide proofs of the parts that were left unresolved in the proof of Lemma \ref{lemma:dPtranscendents}. 
In addition, we explain, using birational representations of extended affine Weyl groups associated with discrete Painlev\'e equations, why the dP solutions given in Lemma \ref{lemma:dPtranscendents} can be obtained.
A detailed explanation is given for Lemma \ref{lemma:dPtranscendents} (i), while for Lemma \ref{lemma:dPtranscendents} (ii), only the necessary data are presented concisely, as the argument is similar.

Note that the translations $T_1$ and $T_2$ defined in this section are not necessarily the same as the translations $T_1$ and $T_2$ appearing in \cite{nakazono2022discrete} or in other papers by the author.

%%%%%%%%%%%%%%%%%%%%%%%%%%%%%%%%%%%%%%%%%%%%%%%%%%%%%%%%%%%
%% 3.1 A6
%%%%%%%%%%%%%%%%%%%%%%%%%%%%%%%%%%%%%%%%%%%%%%%%%%%%%%%%%%%
\subsection{For Lemma \ref{lemma:dPtranscendents} (i)}\label{subsection:proof_A6}
In this subsection, we first review the birational representation of an extended affine Weyl group of type $(A_1+A_1)^{(1)}$ given in \cite{SakaiH2001:MR1882403,JNS2015:MR3403054}.  
Then, using this birational representation, we provide the proof of the part left unresolved in the proof of Lemma \ref{lemma:dPtranscendents} (i).  
Furthermore, we show the relations that led to the discovery of Theorem \ref{theorem:dPsolutions} (i).

Let $a_0,a_1,b,q\in\bbC$ be parameters satisfying
\begin{equation}
 a_0a_1=q,
\end{equation}
and $f_0,f_1,f_2\in\bbC$ be variables satisfying
\begin{equation}
 f_0f_1f_2=1.
\end{equation}
We define the transformation group
\begin{equation}
 \widetilde{W}((A_1+A_1')^{(1)})=\langle s_0,s_1,w_0,w_1,\pi\rangle
\end{equation}
as follows: each element of $\widetilde{W}((A_1+A_1')^{(1)})$ is an isomorphism from the field of rational functions $K(f_0,f_1)$, where $K=\bbC(a_0,a_1,b)$, to itself.
Note that for each element $w\in\widetilde{W}((A_1+A_1')^{(1)})$ and function $F=F(a_i,b,f_j)$, 
we use the notation $w.F$ to mean $w.F=F(w.a_i,w.b,w.f_j)$, that is, 
$w$ acts on the arguments from the left. 
The actions of $\widetilde{W}((A_1+A_1')^{(1)})$ on the parameters are given by
\begin{align*}
 &s_0:(a_0,a_1,b)
 \mapsto
 \left(\dfrac{1}{a_0}, {a_0}^2 a_1, \dfrac{b}{a_0}\right),
 &&s_1:(a_0,a_1,b)
 \mapsto
 \left(a_0 {a_1}^2, \dfrac{1}{a_1}, a_1 b\right),\\
 &w_0:(a_0,a_1,b,q)
 \mapsto
 \left(\dfrac{1}{a_0}, \dfrac{1}{a_1}, \dfrac{b}{a_0},\frac{1}{q}\right),
 &&w_1:(a_0,a_1,b,q)
 \mapsto
 \left(\dfrac{1}{a_0}, \dfrac{1}{a_1}, \dfrac{b}{{a_0}^2 a_1},\frac{1}{q}\right),\\
 &\pi:(a_0,a_1,b,q)
 \mapsto
 \left(\dfrac{1}{a_1}, \dfrac{1}{a_0}, \dfrac{b}{a_0 a_1},\frac{1}{q}\right),
\end{align*}
while those on the variables are given by
\begin{align*}
 &s_0:(f_0,f_1,f_2)\mapsto
 \left(\dfrac{f_0 (a_0 f_0+f_1+a_0)}{f_0+f_1+1},\dfrac{f_1 (a_0 f_0+f_1+1)}{a_0 (f_0+f_1+1)},
 \dfrac{a_0 f_2(f_0+f_1+1)^2}{(a_0 f_0+f_1+a_0)(a_0 f_0+f_1+1)}\right),\\
 &s_1:(f_0,f_1)\mapsto
 \left(\dfrac{f_0 (b f_0 f_1+a_0 a_1)}{a_1 (b f_0 f_1+a_0)},\dfrac{a_1 f_1 (b f_0 f_1+a_0)}{b f_0 f_1+a_0 a_1}\right),\\
 &w_0:(f_0,f_1,f_2)\mapsto
 \left(\dfrac{a_0 (f_0+1)}{b f_0 f_1},\dfrac{b f_0 f_1+a_0 f_0+a_0}{a_0 b f_0 (f_0+1)},
 \dfrac{b^2 f_0}{f_2 (b f_0 f_1+a_0 f_0+a_0)}\right),\\
 &w_1:(f_0,f_1)\mapsto\left(f_1,f_0\right),\\
 &\pi:(f_1,f_2)\mapsto\left(\dfrac{a_0 (f_0+1)}{b f_0 f_1},\dfrac{b f_1}{a_0(f_0+1)}\right).
\end{align*}

%%%%%%%%%%%%%%%%%%%%%%%%%%%%%%
%% Remark
%%%%%%%%%%%%%%%%%%%%%%%%%%%%%%
\begin{remark}\label{remark:action_identity}
We follow the convention that variables and parameters not explicitly included in the actions listed in the equations above remain unchanged under the corresponding transformation.  
That is, the transformation acts as the identity on those variables and parameters.
\end{remark}
%%%%%%%%%%%%%%%%%%%%%%%%%%%%%%
%%%%%%%%%%%%%%%%%%%%%%%%%%%%%%

The transformation group $\widetilde{W}((A_1+A_1')^{(1)})$ forms an extended affine Weyl group of type $(A_1+A_1)^{(1)}$. 
Namely, the transformations satisfy the following fundamental relations:
\begin{subequations}
\begin{align}
 &{s_0}^2={s_1}^2=(s_0s_1)^\infty=1,\qquad
 {w_0}^2={w_1}^2=(w_0w_1)^\infty=1,\\
 &\pi^2=1,\quad
 \pi s_0=s_1\pi,\quad 
 \pi w_0=w_1\pi,
\end{align}
\end{subequations}
and an element of $W(A_1^{(1)})=\langle s_0,s_1\rangle$ commutes with an element of $W(A_1'^{(1)})=\langle w_0,w_1\rangle$.
We note that the relation $(s_0s_1)^\infty=1$ means that there is no positive integer $N$ such that $(s_0s_1)^N=1$.

We define the translations $T_1$ and $T_2$ as
\begin{equation}
 T_1=w_0w_1,\quad
 T_2=\pi s_1w_0.
\end{equation}
These two translations commute, and their actions on the parameters are given by
\begin{subequations}
\begin{align}
 &T_1:(a_0,a_1,b,q)\mapsto (a_0,a_1,qb,q),\\
 &T_2:(a_0,a_1,b,q)\mapsto (qa_0,q^{-1}a_1,b,q).
\end{align}
\end{subequations}
Furthermore, we define the parameters $\{\al,\be,\ep\}$ as
\begin{equation}
 \al=q^{-1/2}b^{1/2},\quad
 \be={a_0}^{1/2},\quad
 \ep=q^{1/2},
\end{equation}
and the variables $\{F,G,H\}$ as
\begin{equation}
 F=f_0,\quad
 G=f_1,\quad
 H=f_2,
\end{equation}
which satisfy
\begin{equation}\label{eqn:A6_qP_FGH}
 FGH=1.
\end{equation}
From the action of $T_1$, we obtain
\begin{align}
 &T_1:(\al,\be,\ep)\mapsto (\ep\al,\be,\ep),\\
 &T_1(F)F=\dfrac{T_1(G)+1}{\ep^2\al^2T_1(G)},\quad
 T_1(G)G=\dfrac{\be^2(F+1)}{\ep^2\al^2 F},\label{eqn:A6_qP_T1}
\end{align}
while from the action of $T_2$, we obtain
\begin{align}
 &T_2:(\al,\be,\ep)\mapsto (\al,\ep\be,\ep),\\
 &T_2(G)G=\dfrac{\be^2(T_2(H)+\al^2)}{T_2(H)(\be^2 T_2(H)+\al^2)},\quad
 T_2(H)H=\dfrac{\al^2}{G(G+1)}.\label{eqn:A6_qP_T2}
\end{align}
We define the parameters $\{\al_l,\be_m\}$ as
\begin{equation}
 \al_l={T_1}^l(\al),\quad
 \be_m={T_2}^m(\be),
\end{equation}
which satisfy
\begin{equation}
 \al_l=\ep^l\al_0,\quad
 \be_m=\ep^m\be_0,
\end{equation}
and define the variables $\{F_{l,m},G_{l,m},H_{l,m}\}$ as
\begin{equation}
 F_{l,m}={T_1}^l{T_2}^m(F),\quad
 G_{l,m}={T_1}^l{T_2}^m(G),\quad
 H_{l,m}={T_1}^l{T_2}^m(H),
\end{equation}
which satisfy
\begin{equation}\label{eqn:A6_qP_FGH}
 F_{l,m}G_{l,m}H_{l,m}=1.
\end{equation}
Applying ${T_1}^l{T_2}^m$ to Equations \eqref{eqn:A6_qP_T1} and \eqref{eqn:A6_qP_T2} and using the commutativity of $T_1$ and $T_2$, we obtain the following relations:
\begin{align}
 &F_{l+1,m}F_{l,m}=\dfrac{G_{l+1,m}+1}{{\al_{l+1}}^2G_{l+1,m}},\quad
 G_{l+1,m}G_{l,m}=\dfrac{{\be_m}^2(F_{l,m}+1)}{{\al_{l+1}}^2F_{l,m}},\label{eqn:A6_qP_T1_lm}\\
 &G_{l,m+1}G_{l,m}=\dfrac{{\be_m}^2(H_{l,m+1}+{\al_l}^2)}{H_{l,m+1}({\be_m}^2H_{l,m+1}+{\al_l}^2)},\quad
 H_{l,m+1}H_{l,m}=\dfrac{{\al_l}^2}{G_{l,m}(G_{l,m}+1)}.\label{eqn:A6_qP_T2_lm}
\end{align}
The relations \eqref{eqn:A6_qP_FGH}--\eqref{eqn:A6_qP_T2_lm} are identical to the relations \eqref{eqn:A6_sol_l}--\eqref{eqn:A6_sol_FGH}. 
By using the procedure described in the proof of Lemma \ref{lemma:dPtranscendents}, $F_{l,m}$ and $G_{l,m}$ are expressed as rational functions of $F_{0,0}$ and $G_{0,0}$ over $\bbC(\al_0,\be_0,\ep)$. 
Since $T_1$ and $T_2$ commute, this representation is independent of the order in which we use Equations \eqref{eqn:A6_qP_T1_lm} and \eqref{eqn:A6_qP_T2_lm}. 
That is, regardless of the order of computation, $F_{l,m}$ and $G_{l,m}$ are uniquely expressed as rational functions of $F_{0,0}$ and $G_{0,0}$ over $\bbC(\al_0,\be_0,\ep)$. 
Therefore, the proof of the unresolved part in the proof of Lemma \ref{lemma:dPtranscendents} (i) is complete.

Finally, we show the computation that led to the discovery of Theorem \ref{theorem:dPsolutions} (i).  
Note that Theorem \ref{theorem:dPsolutions} (i) can be directly verified by explicit computation, as stated in its proof.
Defining the variable $u$ as
\begin{equation}\label{eqn:A6_u_f0}
 u=\dfrac{q^{1/4}}{b^{1/2}f_0},
\end{equation}
we find that it satisfies the relation 
\begin{equation}\label{eqn:A6_T12u}
 \ep\, T_1T_2(u)-u=\dfrac{\ep}{T_2(u)}-\dfrac{1}{T_1(u)}.
\end{equation}
Thus, defining the variable $u_{l,m}$ as  
\begin{equation}
 u_{l,m}={T_1}^l{T_2}^m(u),
\end{equation}
and applying ${T_1}^l{T_2}^m$ to Equation \eqref{eqn:A6_T12u},
we obtain Equation \eqref{eqn:dkdv_ep}.
Moreover, from Equation \eqref{eqn:A6_u_f0} and the definitions of $u_{l,m}$ and $F_{l,m}$, we obtain Equation \eqref{eqn:A6_dPsol}.

%%%%%%%%%%%%%%%%%%%%%%%%%%%%%%%%%%%%%%%%%%%%%%%%%%%%%%%%%%%
%% 3.2 A5
%%%%%%%%%%%%%%%%%%%%%%%%%%%%%%%%%%%%%%%%%%%%%%%%%%%%%%%%%%%
\subsection{For Lemma  \ref{lemma:dPtranscendents} (ii)}\label{subsection:proof_A5}
In this subsection, we first review the birational representation of an extended affine Weyl group of type $(A_2+A_1)^{(1)}$ given in \cite{KNY2001:MR1876614,KNT2011:MR2773334}. 
Then, using this birational representation, we provide the proof of the part left unresolved in the proof of Lemma \ref{lemma:dPtranscendents} (ii).

Let $a_0,a_1,a_2,c,q\in\bbC$ be parameters satisfying
\begin{equation}
 a_0a_1a_2=q,
\end{equation}
and $f_0,f_1,f_2\in\bbC$ be variables satisfying
\begin{equation}
 f_0f_1f_2=qc^2.
\end{equation}
The transformation group 
\begin{equation}
 \widetilde{W}((A_2+A_1)^{(1)})=\langle s_0,s_1,s_2,\pi,w_0,w_1, r\rangle
\end{equation}
is defined by the actions on the parameters:
\begin{equation*}
 s_i(a_j)=a_j{a_i}^{-C_{ij}},\quad 
 \pi(a_i)=a_{i+1},\quad
 w_0(c)=c^{-1},\quad
 w_1(c)=q^{-2}c^{-1},\quad
 r(c)=q^{-1}c^{-1},
\end{equation*}
where $i,j\in\bbZ/3\bbZ$ and $(C_{ij})_{i,j=0}^2$ is the Cartan matrix of type $A_2^{(1)}$:
\begin{equation}
 (C_{ij})_{i,j=0}^2
 =\left(\begin{array}{ccc}2&-1&-1\\-1&2&-1\\-1&-1&2\end{array}\right),
\end{equation}
and those on the variables:
\begin{align*}
 &s_i(f_{i-1})=f_{i-1}\dfrac{1+a_if_i}{a_i+f_i},\quad
 s_i(f_{i+1})=f_{i+1}\dfrac{a_i+f_i}{1+a_if_i},\quad
 \pi(f_i) = f_{i+1},\\
 &w_0(f_i)=\dfrac{a_ia_{i+1}(a_{i-1}a_i+a_{i-1}f_i+f_{i-1}f_i)}{f_{i-1}(a_ia_{i+1}+a_if_{i+1}+f_if_{i+1})},\quad
 w_1(f_i)=\dfrac{1+a_if_i+a_ia_{i+1}f_if_{i+1}}{a_ia_{i+1}f_{i+1}(1+a_{i-1}f_{i-1}+a_{i-1}a_if_{i-1}f_i)},\\
 &r(f_i)={f_i}^{-1},
\end{align*}
where $i\in\bbZ/3\bbZ$.
See Remark \ref{remark:action_identity} for the convention on how to write these actions.
The transformation group $\widetilde{W}((A_2+A_1)^{(1)})$ forms an extended affine Weyl group of type $(A_2+A_1)^{(1)}$.
Namely, the transformations satisfy the following fundamental relations:
\begin{subequations}
\begin{align}
 &{s_i}^2=(s_is_{i+1})^3=1,
 &&\pi^3=1,\quad
 \pi s_i = s_{i+1}\pi,\\
 &{w_0}^2={w_1}^2=(w_0w_1)^\infty=1,
 &&r^2=1,\quad
 rw_0=w_1r,
\end{align}
\end{subequations}
where $i\in\bbZ/3\bbZ$, and an element of $\widetilde{W}(A_2^{(1)})=\langle s_0,s_1,s_2\rangle\rtimes\langle\pi\rangle$ commutes with an element of $\widetilde{W}(A_1^{(1)})=\langle w_0,w_1\rangle\rtimes\langle r\rangle$.

Since the following discussion is the same as that in \S \ref{subsection:proof_A6}, we omit the details and simply present the necessary data.
\begin{description}
\item[Definition of translations]
\begin{equation}
 T_1=rw_0,\quad
 T_2=s_0s_1\pi^2.
\end{equation}
Here, the actions of the translations on $\{a_0,a_1,a_2,c,q\}$ are given by
\begin{subequations}
\begin{align}
 &T_1:(a_0,a_1,a_2,c,q)\mapsto (a_0,a_1,a_2,qc,q),\\
 &T_2:(a_0,a_1,a_2,c,q)\mapsto (qa_0,a_1,q^{-1}a_2,c,q).
\end{align}
\end{subequations}
\item[Definition of parameters and variables (1)]
\begin{subequations}
\begin{align}
 &\al=qc,\quad
 \be=a_0,\quad
 \ga={a_1}^{-1},\quad
 \ep=q,\\
 &u=\dfrac{f_1}{q^{1/2}c},\quad
 F=a_0 f_0,\quad
 G=a_1 f_1,\quad
 H=a_2 f_2.
\end{align}
\end{subequations}
Here, the actions of the translations on $\{\al,\be,\ga,\ep\}$ are given by
\begin{subequations}
\begin{align}
 &T_1:(\al,\be,\ga,\ep)\mapsto (\ep\al,\be,\ga,\ep),\\
 &T_2:(\al,\be,\ga,\ep)\mapsto (\al,\ep\be,\ga,\ep).
\end{align}
\end{subequations}
\item[Relations]
\begin{subequations}
\begin{align}
 &\ep\,T_1T_2(u)-u=\dfrac{\ep}{T_2(u)}-\dfrac{1}{T_1(u)},\\
 &\begin{cases}
 \dfrac{T_1(F)}{G}=\dfrac{\be^2(1+H+HF)}{1+F+FG},\\[1em]
 \dfrac{T_1(G)}{H}=\dfrac{1+F+FG}{\ga^2(1+G+GH)},\\[1em]
 \dfrac{T_1(H)}{F}=\dfrac{\ep^2\ga^2(1+G+GH)}{\be^2(1+H+HF)},
 \end{cases}\\
&\begin{cases}
 T_2(F)F=\dfrac{\al^2\be^2\ga^2(1+T_2(H))}{T_2(H)(\ga^2+\be^2T_2(H))},\\[1em]
 T_2(H)H=\dfrac{\al^2\ga^2(\be^2+F)}{\be^2 F(1+F)},
 \end{cases}\\
&u=\dfrac{\ep^{1/2}\ga\, G}{\al},\quad
FGH=\al^2.
\end{align}
\end{subequations}
\item[Definition of parameters and variables (2)]
\begin{subequations}
\begin{align}
 &\al_l={T_1}^l(\al),\quad
 \be_m={T_2}^m(\be),\quad
 u_{l,m}={T_1}^l{T_2}^m(u),\\
 &F_{l,m}={T_1}^l{T_2}^m(F),\quad
 G_{l,m}={T_1}^l{T_2}^m(G),\quad
 H_{l,m}={T_1}^l{T_2}^m(H).
\end{align}
\end{subequations}
\end{description}

%%%%%%%%%%%%%%%%%%%%%%%%%%%%%%%%%%%%%%%%%%%%%%%%%%%%%%%%%%%
%% Concluding remarks
%%%%%%%%%%%%%%%%%%%%%%%%%%%%%%%%%%%%%%%%%%%%%%%%%%%%%%%%%%%
\section{Concluding remarks}\label{ConcludingRemarks}
In this paper, we have shown that Equation \eqref{eqn:dkdv_ep} admits both dP solutions (see Theorem \ref{theorem:dPsolutions}) and periodic solutions described by Painlev\'e-type ordinary difference equations (see Theorem \ref{theorem:PRsolutions}).
It is noteworthy that although Equation \eqref{eqn:dkdv_ep} is autonomous, it possesses special solutions expressed in terms of non-autonomous difference equations.  
An explanation of the mechanism behind this phenomenon is provided in Remark \ref{Remark:tau}.

In the present work, the construction of dP solutions of Equation \eqref{eqn:dkdv_ep} is based on the $q$-Painlev\'e equations of $A_5^{(1)}$- and $A_6^{(1)}$-surface type.
A natural question is whether other types can also be used.
In \cite{nakazono2022discrete}, dP solutions of the multiplicative-type Equation \eqref{eqn:KO_dKdV} were constructed by means of the $q$-Painlev\'e equations of $A_3^{(1)}$-, $A_4^{(1)}$-, $A_5^{(1)}$-, and $A_6^{(1)}$-surface types.
Why, then, do we not consider the $A_3^{(1)}$- and $A_4^{(1)}$-surface types in the present paper?
The answer is simple: a general method for constructing dP solutions of partial difference equations has not yet been established.
Although it may be possible to construct such solutions using $q$-Painlev\'e equations of $A_3^{(1)}$- and $A_4^{(1)}$-surface types (and, more generally, even the $A_0^{(1)}$-, $A_1^{(1)}$-, and $A_2^{(1)}$-surface types), at present this can only be achieved through ad hoc trial-and-error constructions of solutions.
Consequently, our results are currently limited to the $A_5^{(1)}$- and $A_6^{(1)}$-surface types.
Therefore, further detailed investigations of dP solutions of partial difference equations remain an important topic for future research.

The periodic solutions constructed in Theorem \ref{theorem:PRsolutions} are of particular interest. 
Since these solutions are special solutions characterized by infinitely many initial values (or parameters), much like soliton solutions, their existence suggests a deeper structure.  
The presence of such a broad class of special solutions can be seen as a sign of integrability in the underlying partial difference equation.

An important direction for future research is to investigate whether similar families of special solutions -- described by Painlev\'e-type ordinary difference equations -- can be constructed for other integrable partial difference equations as well.

Before closing these concluding remarks, we briefly describe a work that is currently in preparation.
As explained in Remark \ref{Remark:tau}, although the partial difference equation studied in this paper is autonomous, the bilinear equation satisfied by its $\tau$-function is non-autonomous.
In contrast, the partial difference equations to be treated in the forthcoming work are autonomous even at the level of their $\tau$-functions (for instance, Equation \eqref{eqn:Hirota_dkdv} is included among the equations under consideration).
We will show that such “fully” autonomous partial difference equations admit special solutions that can be described in terms of solutions of the Painlev\'e equations and the Garnier system in two variables.
%%%%%%%%%%%%%%%%%%%%%%%%%%%%%%
%% Acknowledgment
%%%%%%%%%%%%%%%%%%%%%%%%%%%%%%
\subsection*{Acknowledgment}
I would like to express my sincere gratitude to the following individuals for their invaluable contributions to this work:
Dr. Giorgio Gubbiotti, Dr. Pavlos Kassotakis, and Dr. Yang Shi for engaging in productive discussions during the preparation of an earlier version of this paper, which was originally submitted under the title “A new non-autonomous version of Hirota's discrete KdV equation and its discrete Painlev\'e transcendent solutions.” 
This work is a substantially revised version of that submission. 
In particular, I am deeply grateful to Dr. Giorgio Gubbiotti for his advice on degree growth calculations, a topic that was included in the original version but has been omitted from the present paper.
Dr. Akane Nakamura, Ms. Misako Nakamura, Prof. Osamu Nakamura, Mr. Satoshi Nakazono, and Ms. Yoshimi Nakazono for providing the necessary environment and time for this research.
Ms. Mizuki Nakazono for giving me the strength and motivation to carry out this research.

This work was supported by JSPS KAKENHI, Grant Number JP23K03145.

%%%%%%%%%%%%%%%%%%%%%%%%%%%%%%%%%%%%%%%%%%%%%%%%%%%%%%%
%%% Appendix
%%%%%%%%%%%%%%%%%%%%%%%%%%%%%%%%%%%%%%%%%%%%%%%%%%%%%%%
\appendix
%%%%%%%%%%%%%%%%%%%%%%%%%%%%%%%%%%%%%%%%%%%%%%%%%%%%%%%%%%%
%% A. $q$-Painlev\'e equations of $A_5^{(1)}$-, $A_6^{(1)}$- and $A_7^{(1)}$-surface types
%%%%%%%%%%%%%%%%%%%%%%%%%%%%%%%%%%%%%%%%%%%%%%%%%%%%%%%%%%%
\section{$q$-Painlev\'e equations of $A_5^{(1)}$-, $A_6^{(1)}$- and $A_7^{(1)}$-surface types}\label{appendix:qP_list}
We here list some typical examples of $q$-Painlev\'e equations of $A_5^{(1)}$-, $A_6^{(1)}$- and $A_7^{(1)}$-surface types.
Note that in the following, $t\in\bbC^\ast$ plays the role of an independent variable, $f(t),g(t),h(t)\in\bbC$ play the roles of dependent variables, and $c_i,q\in\bbC^\ast$ play the roles of parameters.
Moreover, we adopt the following shorthand notations for the dependent variables:
\begin{equation}
 f=f(t),\quad
 g=g(t),\quad
 h=h(t),\quad
 \overline{\rule{0em}{0.5em}\,~\,~\,}:t\mapsto qt,\quad
 \underbar{\,~\,}:t\mapsto q^{-1}t.
\end{equation}
\begin{itemize}
%%
%%A7
%%
\item[\underline{$A_7^{(1)}$-surface type}]
\begin{equation}\label{eqn:appendix_A7_0}
 \qP{I}\,(A_7^{(1)}):
 \overline{f}\underline{f}=\dfrac{f+1}{tf^2}
\end{equation}
%%
%%A6
%%
\item[\underline{$A_6^{(1)}$-surface type}]
\begin{align}
 \qP{I}\,(A_6^{(1)}):~&~
 \overline{f}\underline{f}=\dfrac{f+1}{tf}
 \label{eqn:appendix_A6_0}\\[0.5em]
 \qP{II}:~&~
 \overline{f}f=\cfrac{\overline{g}+1}{t \overline{g}},\quad
 \overline{g}g=\cfrac{c_1(f+1)}{t f}
 \label{eqn:appendix_A6_1}\\[0.5em]
 \qPD:~&~
 \overline{f}f=\dfrac{c_1(1+\overline{g})}{\overline{g}^2},\quad
 \overline{g}g=\dfrac{1+t^{-1}f}{f(1+{c_1}^{-1}f)}
 \label{eqn:appendix_A6_2}
\end{align}
%%
%%A5
%%
\item[\underline{$A_5^{(1)}$-surface type}]
\begin{align}
 \qP{III}:~&~
 \begin{cases}
 \,\overline{f}f=\dfrac{c_1(1+c_2 t \overline{g})}{\overline{g}(c_2 t+\overline{g})},\\[1em]
 \,\overline{g}g=\dfrac{c_1(1+t f)}{f(t+f)}
 \end{cases}\label{eqn:appendix_A5_1}\\[0.5em]
 \qP{IV}:~&~
 \begin{cases}
 \,\dfrac{\overline{f}}{~g~}=c_1 c_2\dfrac{1+c_3 h(c_1 f+1)}{1+c_1 f(c_2 g+1)},\\[1em]
 \,\dfrac{\overline{g}}{~h~}=c_2 c_3\dfrac{1+c_1 f(c_2 g+1)}{1+c_2 g(c_3 h+1)},\\[1em]
 \,\dfrac{\overline{h}}{~f~}=c_3 c_1\dfrac{1+c_2 g(c_3 h+1)}{1+c_3 h(c_1 f+1)}
 \end{cases}\label{eqn:appendix_A5_2}
\end{align}
In the case of $q$-P$_{\rm IV}$ \eqref{eqn:appendix_A5_2}, we have the following conditions: 
\begin{equation}
 fgh=t^2,\quad c_1c_2c_3=q.
\end{equation}
\end{itemize}

%%%%%%%%%%%%%%%%%%%%%%%%%%%%%%
%% Remark
%%%%%%%%%%%%%%%%%%%%%%%%%%%%%%
\begin{remark}
\qP{I}\,$(A_7^{(1)})$ \eqref{eqn:appendix_A7_0}, 
\qP{I}\,$(A_6^{(1)})$ \eqref{eqn:appendix_A6_0}, 
\qP{II} \eqref{eqn:appendix_A6_1}, 
\qPD \eqref{eqn:appendix_A6_2}, 
\qP{III} \eqref{eqn:appendix_A5_1}
and \qP{IV} \eqref{eqn:appendix_A5_2} 
are known as
a $q$-discrete analogue of the Painlev\'e I equation \cite{RG1996:MR1399286},
that of the Painlev\'e I equation \cite{RG1996:MR1399286},
that of the Painlev\'e II equation \cite{KTGR2000:MR1789477}, 
that of the Painlev\'e III equation of type $D_7^{(1)}$\cite{RGTT2000}, 
that of the Painlev\'e III equation \cite{CNP1991:MR1111648}
and that of the Painlev\'e IV equation \cite{KNY2001:MR1876614}, respectively.
\end{remark}
%%%%%%%%%%%%%%%%%%%%%%%%%%%%%%
%% Remark
%%%%%%%%%%%%%%%%%%%%%%%%%%%%%%
\begin{remark}
\qP{II} \eqref{eqn:appendix_A6_1} can also be written in the following form by eliminating the variable $g$$:$\begin{equation}
 \left(\overline{f}\,f-\dfrac{1}{t}\right)\left(\underline{f}\,f-\dfrac{q}{t}\right)=\dfrac{qf}{c_1 t(f+1)}.
\end{equation}
For this representation, see \cite{RG1996:MR1399286,RGTT2001:MR1838017,SakaiH2001:MR1882403}.
\end{remark}
%%%%%%%%%%%%%%%%%%%%%%%%%%%%%%
%% Remark
%%%%%%%%%%%%%%%%%%%%%%%%%%%%%%
\begin{remark}
The relation between \qP{I}\,$(A_6^{(1)})$ \eqref{eqn:appendix_A6_0} and \qP{II} \eqref{eqn:appendix_A6_1} can be explained via projective reduction.
(For the details of the projective reduction see \cite{KNT2011:MR2773334}; for the details on the relation between Equations \eqref{eqn:appendix_A6_0} and \eqref{eqn:appendix_A6_1} via projective reduction, see \cite{JNS2015:MR3403054}.)
\end{remark}
%%%%%%%%%%%%%%%%%%%%%%%%%%%%%%
%%%%%%%%%%%%%%%%%%%%%%%%%%%%%%

The following is a list of correspondences between the $q$-Painlev\'e equations given in Lemma \ref{lemma:dPtranscendents} and those listed in this appendix.
\begin{description}
%%
%%A6 (l-direction)
%%
\item[Equation \eqref{eqn:A6_sol_l} and \qP{II} \eqref{eqn:appendix_A6_1}]
\begin{equation}
\begin{split}
 &f=F_{l,m},\quad
 g=G_{l,m},\quad
 t={\al_{l+1}}^2,\quad
 \overline{\rule{0em}{0.5em}\,~\,~\,}:l\mapsto l+1,\\
 &c_1={\be_m}^2,\quad
 q=\ep^2.
\end{split}
\end{equation}
%%
%%A6 (m-direction)
%%
\item[Equation \eqref{eqn:A6_sol_m} and \qPD \eqref{eqn:appendix_A6_2}]
\begin{equation}
\begin{split}
 &f=\dfrac{1}{H_{l,m+1}},\quad
 g=\dfrac{1}{G_{l,m+1}},\quad
 t=\dfrac{{\be_m}^2}{{\al_l}^2},\quad
 \overline{\rule{0em}{0.5em}\,~\,~\,}:m\mapsto m-1,\\
 &c_1=\dfrac{1}{{\al_l}^2},\quad
 q=\dfrac{1}{\ep^2}.
\end{split}
\end{equation}
%%
%%A5 (l-direction)
%%
\item[Equation \eqref{eqn:A5_sol_l} and \qP{IV} \eqref{eqn:appendix_A5_2}]
\begin{equation}
\begin{split}
 &f=\dfrac{F_{l,m}}{\be_m},\quad
 g=\ga\,G_{l,m},\quad
 h=\dfrac{\be_m}{\ep\,\ga}H_{l,m},\quad
 t=\dfrac{\al_l}{\ep^{1/2}},\quad
 \overline{\rule{0em}{0.5em}\,~\,~\,}:l\mapsto l+1,\\
 &c_1=\be_m,\quad
 c_2=\dfrac{1}{\ga},\quad
 c_3=\dfrac{\ep\,\ga}{\be_m},\quad
 q=\ep.
\end{split}
\end{equation}
%%
%%A5 (m-direction)
%%
\item[Equation \eqref{eqn:A5_sol_m} and \qP{III} \eqref{eqn:appendix_A5_1}]
\begin{equation}
\begin{split}
 &f=\dfrac{\ga}{\be_m H_{l,m+1}},\quad
 g=\dfrac{\be_{m+1}}{F_{l,m+1}},\quad
 t=\dfrac{\ga}{\be_m},\quad
 \overline{\rule{0em}{0.5em}\,~\,~\,}:m\mapsto m-1,\\
 &c_1=\dfrac{\ep}{{\al_l}^2},\quad
 c_2=\dfrac{1}{\ga},\quad
 q=\ep.
\end{split}
\end{equation}
\end{description}

%%%%%%%%%%%%%%%%%%%%%%%%%%%%%%%%%%%%%%%%%%%%%%%%%%%%%%%%%%%
%%% References 
%%%%%%%%%%%%%%%%%%%%%%%%%%%%%%%%%%%%%%%%%%%%%%%%%%%%%%%%%%%
\def\cprime{$'$} \def\cprime{$'$}

\end{document}